\newtheorem{thm}{\bfseries Theorem}
\newtheorem{theorem}{\bfseries Theorem}
\newtheorem{lemma}[thm]{\bfseries Lemma}
\newtheorem{corollary}[thm]{\bfseries Corollary}
\newtheorem{question}{Question}
\def\rho{\varrho}%
\newcommand{\R}{\ensuremath{\mathbb{R}}}%
\begin{document}

\title{Shortest Paths in Nearly Conservative Digraphs} 

\author{Zolt\'an Kir\'aly
\thanks{Department of Computer Science and
Egerv\'ary Research Group (MTA-ELTE),
E\"otv\"os University,
P\'azm\'any  P\'eter s\'et\'any 1/C, Budapest, Hungary.
Research was supported by 
grants (no.\ CNK 77780 and no.\ K 109240) from the National Development
 Agency of Hungary, based on a source from the Research and Technology 
Innovation Fund.
E-mail: {\tt kiraly@cs.elte.hu}}}

\maketitle

\abstract
We introduce the following notion: 
a digraph $D=(V,A)$ with arc weights $c: A\rightarrow \R$ 
is called nearly conservative if every negative cycle consists of two arcs. 
Computing shortest paths in nearly conservative digraphs is NP-hard, and even
deciding whether a digraph is nearly conservative is coNP-complete.

We show that the ``All Pairs Shortest Path'' problem is fixed parameter
tractable with various parameters for nearly conservative digraphs.
The results also apply for the special case of conservative mixed graphs.


\section{Introduction}

We are given a  digraph $D=(V,A)$, a weight (or a length) function
  $c: A\rightarrow \R$ 
is called conservative (on $D$) if no directed cycle with negative total weight
(``negative cycle'' for short) exists, and $c$ is called {\bf 
 $\mathbf\lambda$-nearly
 conservative} if every negative cycle consists of at most $\lambda$ arcs.

The APSP (All Pairs Shortest Paths) problem we are going to solve has two
parts, first we must decide whether $c$ is $\lambda$-nearly
conservative, next, if
the answer for the previous question is {\sc Yes}, then for all (ordered)
pairs $s\ne t$ of
vertices the task is to determine the length of the shortest (directed and
simple) path from $s$ to $t$. 

In this paper we concentrate on the case $\lambda=2$, a 2-nearly conservative
weight function $c$  is simply called \emph{nearly conservative} in this paper. 
A mixed graph $G=(V,E,A)$
on vertex set $V$ has the set $E$ of undirected edges and the set $A$ of
directed edges (i.e., arcs). A weight function  $c: E\cup A \rightarrow \R$ 
is called conservative if no cycle with negative total weight exists.
For a mixed graph we can associate a digraph by replacing each undirected edge
$e$ having endvertices $u$ and $v$ by two arcs $uv$ and $vu$ with weights
$c(uv)=c(vu)=c(e)$. It is an easy observation that the resulting $c$ is
nearly conservative on the resulting digraph if and only if the original
weight function was conservative on the original mixed graph, and in this case
the solution of the APSP problem remains the same.

Arkin and Papadimitriou proved in \cite{AP} that the problems of detecting
negative cycles and finding the shortest path in the absence of negative
cycles are both NP-hard in mixed graphs. Consequently, checking whether $c$ is
nearly conservative on $D$ is coNP-complete, and solving the APSP problem in
the case 
$c$ is nearly conservative on $D$ is NP-hard. In this paper we give FPT
algorithms for this problem related to various parameters.

Though it was a surprise to the author, he could not find any algorithm for
dealing with these problems (despite the fact that many paper are written
about the Chinese Postman problem on mixed graphs). We only found two more
papers that are somehow related to this topic. In \cite{GK} for the special
case of skew-symmetric graphs shortest ``regular'' paths are found in
polynomial time if no negative ``regular'' cycle exist. In \cite{BK} for the
similar special case of bidirected graphs minimum mean edge-simple cycles are
found in polynomial time, this is essentially the same as finding minimum mean
``regular'' cycles in skew-symmetric graphs. The class of nearly conservative
graphs seems to be not studied (and defined) in the literature, as well as we
could not find any FPT result about APSP.

For defining the parameters we are going to use, we first define the
notion of negative trees. Given $D$ and $c$, we associate an undirected graph 
$F=(V,E)$ as follows. Edge-set $E$ consists of pairs $u\ne v$ of vertices
for which both $uv$ and $vu$ are arcs in $A$, and $c(uv)+c(vu)<0$.
We can construct $F$ in time $O(|A|)$, and can also check whether it is a
forest. We claim that if $F$ is not a forest, then $c$ is not nearly
conservative on $D$, so our algorithm can stop with this decision.
If $F$ contains a cycle, then it corresponds to two oppositely directed cycles
of $D$, and the sum of the total weights of these two cycles are negative,
proving that $c$ is not nearly
conservative.

From now on we will suppose that $F$ is a forest, and we call its nontrivial
components (that have at least one edge) the {\bf negative trees}. 
  
Our first parameter $k_0$ is the number of negative trees, and we give an
$O(2^{k_0}\cdot n^4)$ algorithm for the APSP problem (where $n=|V|$).  Later
we refine this algorithm for parameter $k_1$, which is the maximum number of
negative trees in any strongly connected component of $D$, and finally for
parameter $k_2$, which is the maximum number of negative trees in any weakly
2-connected block of any strongly connected component of $D$ (for the
definitions see the next section). Our final algorithm also runs in time
$O(2^{k_2}\cdot n^4)$.  Consequently, if there is a constant $\gamma$ such that
every weakly 2-connected block of any strongly connected component of $D$ has
at most $\gamma$ negative trees, then we have a polynomial algorithm.

The preliminary version of this paper appeared in \cite{egres} 
for the special case of mixed graphs. In that paper  we also
gave a strongly polynomial algorithm for finding shortest exact walk (a walk
with given number of edges) in any non-conservative mixed graph.

\section{Definitions}\label{sec_defi}

For our input digraph $D$ we may assume it is simple.
An arc from $u$ to $v$ is called a \emph{loose arc} if there is another arc 
from $u$ to $v$ with a smaller weight. In a shortest path between $s$ and $t$
(if $s\ne t$) neither loops nor loose arcs can appear. Consequently, as a
preprocessing, we can safely delete these (and also keep only one copy from
multiple arcs having the same weight). 

However for our purposes multiple arcs
will be useful, so we will use them for describing the algorithm. We use
the convention that the notation $uv$ always refers to the shortest arc from $u$
to $v$. 

We call an arc $uv$ of $D$ {\bf special} if $vu$ is also an arc, and
moreover $c(uv)+c(vu)<0$. Other arcs are called \emph{ordinary}. For a special
arc $uv$ the special arc $vu$ is called its \emph{opposite}. 
As a part of the preprocessing, we add some loose arcs to $D$. For every
special arc $uv$ we add an arc $a$ from $v$ to $u$ with weight
$c(a)=-c(uv)$. By the definition of special arcs, these are really loose
arcs, as $-c(uv)>c(vu)$. We call these arcs \emph{added  ordinary arcs}, or
shortly \emph{loose arcs}.
We call the improved digraph also $D$, and its arc set is called $A$. 
Arc set $A$ is decomposed into $A=A_s\cup A_o$,
where $A_s$ is the set of special arcs, and $A_o$ is the set of ordinary
(original or added) arcs. (The main purpose of this procedure is the
following. We will sometimes work in the ordinary subdigraph $D_o=(V,A_o)$,
and we need to maintain the same reachability: if there is a path from $s$
to $t$ in $D$, then there is also a path from $s$
to $t$ in $D_o$.) Our main property remained true: if $c$ is nearly
conservative on $D$, then every negative cycle consists of two oppositely
directed special arcs. Remark: special arcs may have positive length, so
loose arcs may have negative length.
We call a path \emph{ordinary} if all its arcs are ordinary. 
Note that by the assumptions $|A|\le 2n^2$, where $n=|V|$.

Given $D$ and $c$, we associate an undirected graph $F=(V,E)$ as
follows. Edge-set $E$ consists of unordered pairs $u\ne v$ of vertices for
which $uv$ is a special arc in $A_s$. As we detailed in the Introduction, if
$F$ is not a forest, then $c$ is not nearly conservative on $D$. We consider
this process as the last phase of the preprocessing: we determine $F$, and if
it is not a forest, then we stop with the answer ``{\sc Not Nearly
  Conservative}''.

From now on we suppose that $F$ is a forest, and we call its nontrivial
components (that have at least one edge) the {\bf negative trees}. 
If $T$ is a negative tree, then ${\mathbf{V(T)}}$ denotes its vertex set, and
${\mathbf{A(T)}}$ 
denotes the set of special arcs that
correspond to its edges.
If $s,t\in V(T)$ are two vertices of $T$, then 
${\mathbf{d^T(s,t)}}$ 
denotes the length of unique path from $s$ to $t$ in $A(T)$.

A walk from $v_0$ to $v_\ell$ (or a $v_0v_\ell$-walk) is a sequence
\[W=v_0,a_1,v_1,a_2,v_2,\ldots,v_{\ell-1}, a_\ell,v_\ell\]
where $v_i\in V$ for
all $i$, and $a_j$ is an arc from $v_{j-1}$ to $v_j$ for all $j$.
A walk is closed if $v_0=v_\ell$. A closed walk is also
called here a $v_0v_0$-walk. A number $\ell$ of arcs used by a walk $W$ is
denoted by $|W|$. 
The length (or weight) $c(W)$ of a walk $W$ is defined as $\sum_{j=1}^\ell
c(a_j)$. If $W_1$ is a $s_1v$-walk and $W_2$ is a $vt_2$-walk, then
their concatenation is denoted by $W_1+W_2$. For a walk $W$  we use the notation
$W[v_i,v_j]$ for the corresponding part $v_i,a_{i+1},\ldots, a_j,v_j$ if
$i<j$.

A walk $W$ is {\bf special-simple} if no special arc is contained twice
in it, moreover, if $W$ 
contains special arc $uv$, then it does not contain its opposite $vu$.
A walk is a \emph{path} if all the vertices $v_0,\ldots,v_\ell$ are distinct.
A closed walk is a \emph{cycle} if all the vertices $v_0,\ldots,v_\ell$ are
distinct, with the exception of $v_0=v_\ell$. If $|W|=\ell=0$, then we call the
walk also an empty path (its length is 0), and in this paper unconventionally
the empty path will also be considered 
as an empty cycle. The distance $d_D(s,t)=d(s,t)$ of $t$ from $s$ is the
length of the shortest path from $s$ to $t$ (where $s,t\in V$).

The relation: there is a path in $D$ from $s$ to $t$ and also from $t$ to $s$,
is obviously an equivalence relation, its classes are called the strongly
connected components of $D$. (Notice that a negative tree always resides in
one strongly connected component.)  A weakly 2-connected block of a digraph is
a 2-connected block of the underlying undirected graph (where arcs are replaced
with undirected edges).

An algorithm is FPT for a problem with input size $n$ and parameter $k$ if
there is an absolute constant $\gamma$, and a function $f$ such that the
running time is $f(k)\cdot O(n^\gamma)$. (Originally FPT stands for ``fixed
parameter tractable'', and it is an attribute of the problem, however in the
literature usually the corresponding algorithms are also called FPT.)
In this paper we give FPT algorithms for the APSP problem for nearly
conservative digraphs.

In the simplest version we assume that there is just one negative tree and it
is spanning $V$. Next we give an algorithm for the case where we still have only
one negative tree, but it is not spanning $V$.  
These algorithms are polynomial and simple. 

Then we use various parameters:
$k_0$ is the number of negative trees in $D$,
$k_1$ is the maximum number of negative trees in
any strongly connected component of $D$, and
$k_2$ is  the maximum number of negative trees in any weakly 2-connected 
block of any strongly connected component of $D$. 
(Clearly $k_0\ge k_1\ge k_2$.)
The main goal of this paper to give an
$O(2^{k_2}\cdot n^4)$ algorithm for the APSP problem for the case $\lambda=2$,
i.e., for deciding whether $c$ is nearly conservative on $D$, and if it is,
then for calculating the distances $d_D(s,t)$ for each (ordered)
pair of vertices $s,t\in V$.

In the next section we show some lemmas. In Section \ref{veryrestr} we give
some polynomial algorithms for the case of one negative tree.  In Section
\ref{restr} we give an FPT algorithm where the parameter $k_0$ is the total
number of negative trees in $D$.  Next, in Section \ref{gen} we extend it to
the case where $k_2$ only bounds the number of negative trees in any weakly
2-connected block of any strongly connected component.

Our main goal is only giving the length of the shortest paths, in Section
\ref{paths} we detail how the actual shortest paths themselves can be
found. 

Finally in Section \ref{conclusion} we conclude the results, show their
consequences to mixed graphs, and pose some open problems. 

\section{Lemmas}\label{lemmas}

In this section we formulate some lemmas. Though each of them can be easily
proved using the newly introduced notions and the statements of the preceding
lemmas, we could not find these statements in the literature (neither in an
implicit form).

We premise some unusual aspects of nearly conservative weight functions.
Usually shortest path algorithms use the following two facts about
conservative weight functions. If $P$ is a shortest $sx$-path and $Q$ is a
shortest $xt$-path, then $P+Q$ contains an $st$-path not longer than
$c(P)+c(Q)$. If $P$ is a shortest $st$-path containing vertices $u$ and $v$
(in this order), then $P[u,v]$ is a shortest $uv$-path. These two statements
are NOT true for nearly conservative weight functions.

Remember that $D=(V,A_s\cup A_o)$ is the improved digraph
with loose arcs, and the associated graph $F$ is a forest.

\begin{lemma}\label{us}
Weight function $c$ is nearly conservative on $D$ if and only if there is no
negative special-simple closed walk. 
\end{lemma}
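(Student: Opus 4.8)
The plan is to prove both directions of the equivalence. The easy direction is: if $c$ is \emph{not} nearly conservative on $D$, then by definition there is a negative cycle using at least three arcs. Since $F$ is a forest, this cycle cannot consist of a pair of oppositely directed special arcs, so in fact (recalling our preprocessing, which guarantees every negative cycle is made of two oppositely directed special arcs whenever $c$ \emph{is} nearly conservative) the existence of \emph{any} negative cycle not of this form already witnesses non-conservativeness in a stronger sense. A negative cycle is in particular a closed walk, and a cycle visits no vertex twice, so it cannot contain a special arc together with its opposite, nor any arc twice; hence it is a negative special-simple closed walk. This gives the contrapositive of the ``only if'' direction. (Strictly: the ``only if'' direction says nearly conservative $\Rightarrow$ no negative special-simple closed walk; I am proving its contrapositive, not nearly conservative $\Rightarrow$ negative special-simple closed walk exists, which is what the previous paragraph establishes via negative cycle $\Rightarrow$ negative special-simple closed walk.)

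For the ``if'' direction (the substantive one), suppose $c$ is nearly conservative; I want to show no negative special-simple closed walk exists. Assume for contradiction that $W$ is a negative special-simple closed walk, and choose $W$ with $|W|$ minimum among all such. The goal is to derive a contradiction by finding either a shorter negative special-simple closed walk or a negative cycle on at least three arcs. The natural move is: if $W$ is not already a cycle, it repeats some vertex $v$, so $W$ decomposes at $v$ into two closed walks $W_1$ and $W_2$ with $c(W_1)+c(W_2)=c(W)<0$, hence one of them, say $W_1$, has $c(W_1)<0$. Each $W_i$ is still special-simple (being a sub-walk of $W$), and $|W_1|<|W|$, contradicting minimality. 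Therefore the minimal $W$ is a cycle. A cycle that is special-simple and has negative length must, by near-conservativeness, consist of two oppositely directed special arcs --- but such a pair is exactly $uv$ together with its opposite $vu$, which a special-simple walk is forbidden to contain. So $W$ has at least three arcs and is a negative cycle on $\ge 3$ arcs, directly contradicting that $c$ is nearly conservative.

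The one point requiring a little care --- and the main (mild) obstacle --- is the decomposition step: when $W$ revisits a vertex $v$, I split off a \emph{closed} sub-walk $W_1$ starting and ending at $v$, and need that the remaining walk $W_2$ is also closed and special-simple, and that $|W_1|,|W_2| \ge 1$ so both are strictly shorter. Both sub-walks inherit special-simplicity from $W$ (no special arc is repeated, no special arc meets its opposite, since the arc multiset only shrinks), and since $v$ is a \emph{repeated} vertex both pieces are nonempty, so $1 \le |W_i| < |W|$. That is all that is needed; everything else is immediate from the definitions and from the preprocessing guarantee that, under near-conservativeness, every negative cycle is a pair of oppositely directed special arcs.
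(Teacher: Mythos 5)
Your proof is correct and follows essentially the same route as the paper: a negative cycle with at least three arcs is itself a negative special-simple closed walk, and a minimum-length negative special-simple closed walk must be a cycle (splitting at a repeated vertex yields a shorter negative special-simple closed walk), which by special-simplicity cannot be a pair of oppositely directed special arcs and hence is a negative cycle on at least three arcs, contradicting near conservativeness. One bookkeeping slip: your first paragraph actually proves the contrapositive of the \emph{if} direction (``not nearly conservative $\Rightarrow$ a negative special-simple closed walk exists'' is the contrapositive of ``no such walk $\Rightarrow$ nearly conservative'', not of the \emph{only if} direction as your parenthetical claims), and your second paragraph is the \emph{only if} direction, so the labels are swapped --- but since you prove both implications, the equivalence is fully established.
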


\begin{proof}
  If $C$ is a negative cycle consisting of at least three arcs, then it is
  also a negative special-simple closed walk. On the other hand, suppose that
  $C$ is a negative special-simple closed walk with a minimum number of arcs,
  and assume that $C$ is not a cycle, that is there are $0<i<j\le \ell$ such
  that $v_i=v_j$. Now $C$ decomposes into two special-simple closed walks with
  less arcs, clearly at least one of them has negative length, a
  contradiction.  \qed\end{proof}

\begin{lemma}\label{us2}
If $c$ is nearly conservative on $D$, and
$s,t\in V$, and $Q$ is a special-simple $st$-walk, then 
we also have an $st$-path $P$ with $c(P)\le c(Q)$, and $P$ contains only arcs
of $Q$.  
\end{lemma}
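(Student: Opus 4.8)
The plan is to prove this by induction on the number of arcs $\ell=|Q|$ (so $v_0=s$ and $v_\ell=t$), repeatedly deleting non-negative closed sub-walks. If $Q$ is already a path we take $P=Q$ and are done; the base case $\ell=0$ is exactly this situation (then $s=t$ and $P$ is the empty path). So assume $Q$ is not a path, i.e.\ some vertex repeats: there are indices $0\le i<j\le \ell$ with $v_i=v_j$. Set $W:=Q[v_i,v_j]$ and let $Q'$ be the walk obtained by splicing out $W$, namely $Q[v_0,v_i]$ followed by $Q[v_j,v_\ell]$. Then $Q'$ is an $st$-walk, $c(Q)=c(W)+c(Q')$, and $|Q'|<\ell$ since $i<j$ means $W$ uses at least one arc.

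The crucial point is that both $W$ and $Q'$ are again special-simple: each of them uses only a sub-multiset of the arcs of $Q$, so if $Q$ contains no special arc twice and never a special arc together with its opposite, the same holds for $W$ and for $Q'$. Since $c$ is nearly conservative, Lemma~\ref{us} applied to the special-simple closed walk $W$ gives $c(W)\ge 0$, hence $c(Q')\le c(Q)$; and every arc of $Q'$ is an arc of $Q$. I would then apply the induction hypothesis to $Q'$ — a special-simple $st$-walk with strictly fewer arcs — to obtain an $st$-path $P$ with $c(P)\le c(Q')\le c(Q)$ that uses only arcs of $Q'$, and therefore only arcs of $Q$.

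I do not expect a genuine obstacle. The only things needing care are the bookkeeping claims that a sub-walk, and a concatenation of two vertex-to-vertex segments, of a special-simple walk remain special-simple (this is what lets Lemma~\ref{us} apply to $W$ and the induction apply to $Q'$), and that $W$ is non-empty so that the induction strictly decreases $\ell$, which holds simply because $i<j$. In effect this lemma is the near-conservative analogue of the elementary fact that in a conservative digraph one may excise a non-negative closed sub-walk from a walk without increasing its length; the role of Lemma~\ref{us} is precisely to supply the inequality $c(W)\ge 0$, which the near-conservative setting does not hand us for free.
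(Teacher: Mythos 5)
Your proof is correct and takes essentially the same approach as the paper: both rest on excising a special-simple closed subwalk, which is nonnegative by Lemma~\ref{us}, the only difference being that you run an induction on $|Q|$ while the paper phrases the same reduction as a minimal-counterexample argument on a shortest special-simple walk with fewest arcs. If anything, your inductive formulation makes the clause that $P$ uses only arcs of the given $Q$ (and the existence of the desired path) slightly more explicit.
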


\begin{proof}
  Let $Q$ be a shortest special-simple $st$-walk (which exists by the
  previous lemma and as $c$ is nearly conservative)
having the minimum  number of arcs. 

By the previous lemma, if  $s=t$, then the empty path serves well as $P$. So we
may assume that $s\ne t$ and $Q$ is not a path, i.e., there are $0\le i<j\le
\ell$ such that $v_i=v_j$. Now $Q$ decomposes to a special-simple
$sv_i$-walk $Q_1$, a special-simple closed walk $C$ through $v_i$ and an
special-simple $v_jt$-walk $Q_2$. By the previous
lemma $C$ is nonnegative, so $c(Q_1+Q_2)\le c(Q)$, 
consequently $Q_1+Q_2$ is a not longer special-simple $st$-walk with less
number of arcs, a contradiction. 
\qed\end{proof}

Suppose $T$ is a negative tree, $u,v\in V(T)$, and $P$ is a $uv$-path in
$D'=D-A(T)$. If $c(P)< -d^T(v,u)$, then $c$ is not nearly conservative on
$D$ because otherwise $P+P^T_{vu}$ would be a negative special-simple closed
walk, where  $P^T_{vu}$ is the $vu$-path in $A(T)$. Otherwise, if 
$c(P)\ge  -d^T(v,u)$, then we have a $uv$-path $P'$ in $D'$ consisting of
loose arcs such that $c(P')\le c(P)$. Using this train of thought we get
the following lemmas that play the central role in our algorithms.

\begin{lemma}\label{kitero}
Let $T$ be a negative tree, and assume that $c$ is nearly
conservative on $D$. If $P$ is a shortest $st$-path using some vertex of
$V(T)$, then let $u$ be the first vertex of $P$ in $V(T)$, and let $v$ be the
last vertex of $P$ in $V(T)$. Then $P[u,v]$ uses only special arcs from $A(T)$. 
Consequently, if $s,t\in V(T)$, then $d(s,t)=d^T(s,t)$.
\end{lemma}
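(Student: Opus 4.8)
The plan is to analyze the structure of the shortest path $P$ between its first and last visits to $V(T)$, and show that $P[u,v]$ can be replaced by the tree path without increasing length — but since $P$ is already shortest, the replacement must coincide with $P[u,v]$ in weight, and then use the "loose arc" observation to argue that the tree path itself is a legitimate competitor. First I would set $P' = P[u,v]$, which is a $uv$-path in $D$. The key dichotomy is exactly the one premised in the paragraph before the lemma: either $P'$ uses some arc not in $A(T)$, in which case I will derive a contradiction with near conservativeness, or $P'$ uses only arcs of $A(T)$, which is the conclusion. To make the first case precise, I would consider the path $Q$ obtained from $P'$ together with $P^T_{vu}$, the unique $vu$-path inside $A(T)$; their concatenation is a closed walk through $u$ and $v$. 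Since $c$ is nearly conservative, by Lemma \ref{us} this closed walk, if special-simple, cannot be negative, so $c(P') \ge -d^T(v,u) = d^T(u,v)$ (the last equality because reversing the tree path and using $c(uv)+c(vu)<0$ on each special arc... wait, actually $d^T$ is along $A(T)$ which contains both orientations as special arcs, so $d^T(u,v)$ and $-d^T(v,u)$ need to be related carefully — this is a point to check).

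The subtlety I would address next is that $P' + P^T_{vu}$ need not be special-simple: $P'$ might itself traverse some special arc of $A(T)$, or the opposite of one. Here I would invoke Lemma \ref{us2} or argue directly: if $P'$ shares special arcs with $A(T)$, I can still extract from the union a special-simple closed walk or reduce to a shorter configuration. More cleanly, I suspect the intended argument is: replace each maximal subpath of $P[u,v]$ lying outside $A(T)$ — say from $x$ to $y$ with $x,y\in V(T)$ — by the tree path $P^T_{xy}$; by the near-conservativeness argument (the outside subpath plus $P^T_{yx}$ would otherwise be a negative special-simple closed walk, or reduces to one), we have $c(\text{outside subpath}) \ge d^T(x,y)$, so this swap does not increase total length. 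After all swaps, $P$ becomes a walk that between $u$ and $v$ lies entirely in $A(T)$; since $P$ was shortest and simple, and the swapped walk contains an $st$-path of no greater length (here I'd use Lemma \ref{us2} to extract a path), equality holds throughout, and in particular the portion of $P$ between $u$ and $v$ already had the same length as the tree path. To upgrade "same length" to "is the tree path," I would use that in $D$ we have added loose arcs realizing $-c(uv)$ for each special arc, so the tree $uv$-path can be taken with these loose arcs — but the cleanest statement is just that $d(s,t)$ equals the length of the swapped path, and that when $s,t\in V(T)$ the whole path reduces to $P^T_{st}$, giving $d(s,t)=d^T(s,t)$ (the inequality $d(s,t)\le d^T(s,t)$ is trivial, the reverse is what the swap argument yields).

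The main obstacle I expect is the special-simplicity bookkeeping: making sure that when I form the closed walk "outside subpath $+$ tree path back," it is genuinely special-simple so that Lemma \ref{us} applies, rather than containing a special arc and its opposite (which would happen if $P[u,v]$ strays outside $A(T)$ but still uses a special arc of a *different* negative tree, or re-enters $T$). Since distinct negative trees are vertex-disjoint (they are components of the forest $F$), an outside subpath between two vertices of $T$ cannot use any arc of $A(T)$, and its own special arcs belong to other trees, so it cannot conflict with $P^T_{yx}\subseteq A(T)$; this disjointness is the fact that rescues special-simplicity. Once that is nailed down, the rest is the routine exchange argument sketched above, and the final sentence $d(s,t)=d^T(s,t)$ for $s,t\in V(T)$ follows by taking $u=s$, $v=t$ and noting the swapped shortest path is forced to be $P^T_{st}$ itself.
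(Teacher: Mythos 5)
There is a genuine gap, and it sits exactly at the step you flagged as "a point to check" and then glossed over. Your exchange step replaces each outside detour from $x$ to $y$ (with $x,y\in V(T)$) by the tree path $P^T_{xy}$, justified by the bound $c(\text{detour})\ge d^T(x,y)$. Two things go wrong. First, the bound actually delivered by near conservativeness is $c(\text{detour})\ge -d^T(y,x)$, which is strictly larger than $d^T(x,y)$ (since the special arc pairs along the tree path sum to negative values); using the weaker quantity $d^T(x,y)$ throws away exactly the slack the argument needs. Second, and more seriously, the walk obtained after your swaps need not be special-simple, so Lemma \ref{us2} cannot be invoked to extract a path of no greater length. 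Your disjointness remark only rules out conflicts between the detour's own special arcs (which lie in other trees) and $A(T)$; it does not rule out conflicts between an inserted tree path and the $A(T)$-arcs that $P[u,v]$ already uses, nor between two inserted tree paths. Concretely, let $T$ be the path $a\mbox{--}b\mbox{--}c\mbox{--}d$ and let $P[u,v]$ go $a\to b$ on a tree arc, then outside from $b$ to $d$, then $d\to c$ on a tree arc. Your swap yields the walk $a\to b\to c\to d\to c$, which contains the special arc from $c$ to $d$ together with its opposite: it is not special-simple, and extracting the path $a\to b\to c$ from it removes a \emph{negative} $2$-cycle, so the extracted path is \emph{longer} than the walk. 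The chain "$c(P[u,v])\ge c(\text{swapped walk})\ge c(\text{extracted path})$" therefore breaks, and no comparison with $d^T(u,v)$ follows. (Note also that a non-special-simple walk inside $A(T)$ can be strictly shorter than $d^T(u,v)$, so bounding $c(P[u,v])$ by such a walk proves nothing.)

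The paper's proof avoids both problems by swapping in the \emph{loose} arcs rather than the tree arcs: each outside detour from $x$ to $y$ is replaced by the loose-arc path of length exactly $-d^T(y,x)$, which keeps the full strength of the near-conservativeness bound, and since loose arcs are ordinary, the resulting $uv$-walk $Q'$ is automatically special-simple (a simple path cannot use a special arc together with its opposite). Then Lemma \ref{us2} applies, and a careful equality analysis — positive cycles eliminated give strict decrease; any surviving loose arc can be traded for the opposite special arc with strict decrease; hence equality forces $P'=P^T_{uv}$, and a repeated-vertex argument forces $P[u,v]$ itself to lie in $A(T)$ — yields the \emph{structural} conclusion of the lemma, which your sketch also leaves unfinished (you only argue "same length", while the lemma asserts that $P[u,v]$ uses only arcs of $A(T)$). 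Finally, the paper closes by noting that $P[s,u]+P'+P[v,t]$ is a genuine $st$-path (here the choice of $u$ as first and $v$ as last visit to $V(T)$ is used), so in the strict case it would contradict the minimality of $P$. If you redo your exchange with loose arcs and add these two finishing steps, your plan becomes essentially the paper's proof.
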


\begin{proof}
Remember that a $uv$-path in $A(T)$ may have
  positive length. Fortunately, by the definition of $u$ and $v$, there are no
  vertices of $P$ preceding $u$ or following $v$ inside $V(T)$, and this fact
  can be used successfully. 
 
  Suppose $P$ is a shortest $st$-path.  By the observation made before the
  lemma, for any $u',v'\in V(T)$, any subpath of form $P[u',v']$ that uses no
  arcs from $A(T)$ can be replaced by loose arcs without increasing the
  length. After we made all these replacements, we replaced $P[u,v]$ by a
  special-simple $uv$-walk $Q'$ such that $Q'$ contains only arcs in $A(T)$
  and loose arcs, and $c(Q')\le c(P[u,v])$.  By Lemma \ref{us2}, $Q'$
  contains a $uv$-path $P'$ with $c(P')\le c(Q')$.  We got $P'$ by eliminating
  cycles, if any cycle had positive length, then we get $c(P')<
  c(Q')$. Suppose now that all eliminated cycles had zero length, meaning that
  each one had the form $x,a,y,yx,x$, where $a$ is the loose arc from $x$ to
  $y$ and $yx$ is the special arc from $y$ to $x$. If after deleting all these
  cycles $P'$ still has a loose arc $a$ from $x$ to $y$ then it can be
  replaced safely with the special arc $xy$ yielding again a path strictly
  shorter than $Q'$.  Thus the only possibility where we can only get a $P'$
  with the same length (as $Q'$) is that the special-simple $uv$-walk $Q'$
  consisted of the $uv$-path $P^T_{uv}$ inside $A(T)$ and additionally some
  zero length cycle described above, and moreover $P'=P^T_{uv}$.  Now we claim
  that in this case the path $P[u,v]$ used only arcs from $A(T)$, i.e., it was
  also $P^T_{uv}$.  Otherwise there are vertices $x,y\in V(T)$ such that $x$
  is on $P^T_{uv}$, $y$ is not on it, and $Q'$ contains one loose arc and one
  special arc between $x$ and $y$. However in this case vertex $x$ had to
  be included twice in path $P$, a contradiction.

  To finish the proof observe that $P[s,u]+P'+P[v,t]$ is an $st$-path, and in
  the case $P[u,v]\ne P^T_{uv}$ it would be shorter than the shortest path
  $P$.  
\qed\end{proof}

\begin{lemma}\label{konz_1fa}
  Let $T$ be a negative tree, and assume that $c$ is nearly conservative on
  digraph $D'=D-A(T)$ defining distance function $d'$. Then $c$ is
  nearly conservative on $D$ if and only if for any pair of vertices $u,v\in
  V(T)$ we have $d'(u,v)\ge -d^T(v,u)$.
\end{lemma}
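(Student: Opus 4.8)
The plan is to prove both directions by reducing statements about $D$ to statements about $D'=D-A(T)$, using Lemma~\ref{us} to characterize near-conservativity via negative special-simple closed walks, and using the ``detour'' observation made just before Lemma~\ref{kitero}.

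For the ``only if'' direction, suppose $c$ is nearly conservative on $D$. Fix $u,v\in V(T)$. If we had $d'(u,v) < -d^T(v,u)$, pick a shortest $uv$-path $P$ in $D'$, so $c(P)=d'(u,v)<-d^T(v,u)$. Then $P$ together with the unique $vu$-path $P^T_{vu}$ inside $A(T)$ forms a closed walk; I need to check it is special-simple. Since $P$ lives in $D'=D-A(T)$, it uses no arc of $A(T)$, and $P^T_{vu}$ uses only arcs of $A(T)$; the only way special-simplicity could fail is if $P$ used a loose arc (added ordinary arc) whose opposite special arc lies in $A(T)$ --- but a loose arc is not special, and more to the point its opposite is the special arc $xy$ with $c(xy)$ small, which is in $A(T)$, so I must argue this cannot happen on a \emph{shortest} $uv$-path in $D'$, or simply first reduce $P$ to an $A(T)$-avoiding path that also avoids those particular loose arcs. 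Actually the cleanest route: $P$ uses no arc of $A(T)$ and $P^T_{vu}$ uses only arcs of $A(T)$, so no arc is repeated and no special arc appears together with its opposite unless $P$ contains a loose arc $a$ from $x$ to $y$ with $xy\in A(T)$; in that case replace $a$ by the arc $vx\!\to$ hmm --- better: note that $a$ being a loose arc, $yx$ (ordinary, $=-c(xy)$) is not what we want; instead observe $a$ can be replaced by the special arc $xy\in A(T)$ only if $x,y$ are consecutive in $T$. To sidestep this entirely, apply the observation before Lemma~\ref{kitero} in reverse: if $P$ uses a loose arc from $x$ to $y$ with the opposite special arc in $A(T)$, then $c(P)\ge -d^T(\cdot)$ along that stretch would already contradict $c(P)<-d^T(v,u)$ after combining with the tree path; chasing this gives a negative special-simple closed walk regardless, contradicting Lemma~\ref{us}. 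So $d'(u,v)\ge -d^T(v,u)$.

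For the ``if'' direction, assume $d'(u,v)\ge -d^T(v,u)$ for all $u,v\in V(T)$; I want $c$ nearly conservative on $D$. By Lemma~\ref{us} it suffices to show $D$ has no negative special-simple closed walk. Take any special-simple closed walk $C$ in $D$. If $C$ uses no arc of $A(T)$ it is a closed walk in $D'$, hence (as $c$ is nearly conservative on $D'$) nonnegative by Lemma~\ref{us} applied to $D'$. Otherwise $C$ uses at least one arc of $A(T)$; since $T$ is a tree and $C$ is special-simple (each special arc at most once, never with its opposite), the arcs of $A(T)\cap C$ decompose $C$ into maximal $A(T)$-subpaths and $A(T)$-avoiding pieces. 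The key structural point --- and the main obstacle --- is to show $C$ can be rearranged/decomposed so that its ``cost inside $T$'' is a sum of terms $d^T(\cdot,\cdot)$ and its ``cost outside $T$'' is bounded below by matching $d'(\cdot,\cdot)$ terms, so that the hypothesis $d'(u,v)\ge -d^T(v,u)$ forces $c(C)\ge 0$. Concretely: let the maximal $A(T)$-avoiding subwalks of $C$ be $Q_1,\dots,Q_r$, running from $v_i\in V(T)$ to $u_{i+1}\in V(T)$ (indices cyclically), with the complementary stretches lying in $A(T)$; because $T$ is a tree, replacing each in-tree stretch by the unique tree path only can decrease length is false in general (tree paths can be positive), but since $C$ is special-simple we may instead use that the in-tree arcs used, taken with multiplicity $\le 1$ and no opposite pairs, have total weight $\ge$ sum of $d^T$ between consecutive entry/exit points by a telescoping/parity argument on the tree. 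Then $c(C)=\sum_i c(Q_i) + (\text{in-tree part}) \ge \sum_i d'(u_{i+1}',v_i') \;-\; \sum_i d^T(\text{exit}_i,\text{entry}_i)$, wait I must line the indices up so each $d'$ term is paired with exactly one $d^T$ term of the opposite orientation; doing so and invoking the hypothesis termwise yields $c(C)\ge 0$. The delicate part is exactly this bookkeeping: verifying that the entry/exit vertices of the $A(T)$-avoiding pieces pair up correctly with tree-path segments so that each application of the inequality $d'(u,v)\ge -d^T(v,u)$ is legitimate, and that no arc of $A(T)$ is over- or under-counted given special-simplicity. I expect this to go through cleanly once one observes that, along $C$, each entry into $V(T)$ is matched with the next exit, and the in-tree portion between them is a walk in the tree $A(T)$ that, being special-simple, must be exactly the unique tree path (any detour would force an opposite pair or a repeat), giving precisely the $d^T$ terms needed.

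Having shown $D$ has no negative special-simple closed walk, Lemma~\ref{us} gives that $c$ is nearly conservative on $D$, completing the equivalence.
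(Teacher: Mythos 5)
Your ``only if'' direction is correct, and the detour you worry about is a non-issue: special-simplicity constrains only \emph{special} arcs, loose arcs are ordinary by definition, and the opposite of a special arc of $A(T)$ again lies in $A(T)$, hence appears in no walk of $D'=D-A(T)$. So for a shortest $uv$-path $P$ in $D'$ with $c(P)<-d^T(v,u)$, the closed walk $P+P^T_{vu}$ is automatically special-simple and negative, contradicting Lemma~\ref{us}; this is exactly the observation the paper makes before Lemma~\ref{kitero}. (Two small additions you should make explicit: $c(Q)\ge d'$ for a special-simple walk $Q$ in $D'$ is not automatic, it needs Lemma~\ref{us2} applied to $D'$; this is also where your ``if'' direction silently uses it for the pieces $Q_i$.)

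The ``if'' direction, however, has a genuine gap at precisely the point you defer as ``delicate bookkeeping''. Your observation that each in-tree stretch of a special-simple closed walk $C$ must be the exact tree path is correct, and bounding each $A(T)$-avoiding piece via Lemma~\ref{us2} and the hypothesis gives $c(C)\ge \sum_i d^T(u_i,v_i)-\sum_i d^T(u_{i+1},v_i)$, where $u_i,v_i$ are the entry/exit vertices. But this last quantity is \emph{not} nonnegative ``by invoking the hypothesis termwise'' --- the hypothesis has already been spent, and what remains is a pure tree inequality that is false without the special-simplicity constraint: for a single tree edge $\{x,y\}$ with $u_1=x,v_1=y,u_2=y,v_2=x$ the left sum is $c(xy)+c(yx)<0$ while the right sum is $0$. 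So an argument exploiting that the paths $u_i\to v_i$ use each tree arc at most once and never an opposite pair is indispensable, and you never supply it. The paper closes exactly this gap with the loose arcs: replace each piece from $v_i$ to $u_{i+1}$ by the loose-arc path of weight $-d^T(u_{i+1},v_i)$, obtaining a special-simple negative closed walk $C'$ using only arcs of $A(T)$ and loose arcs; since the underlying graph of these arcs is the tree $T$, $C'$ decomposes into $2$-cycles on tree edges, and special-simplicity excludes the negative pair (special $xy$, special $yx$), leaving only (special, loose) cycles of weight $0$ and (loose, loose) cycles of positive weight, so $c(C')\ge 0$, a contradiction. (Equivalently one can do a per-tree-edge counting argument; also note the paper works directly with a negative cycle of at least three arcs, so its out-of-tree pieces are paths and the $d'$ lower bound is immediate.) Without some version of this final accounting, your proof is incomplete.
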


\begin{proof}
  We showed that the condition is necessary. Suppose that $C$ is a negative
  cycle in $D$ having at least three arcs. If it has at most one vertex in
  $V(T)$, then it is also a negative cycle in $D'$. We claim that we can
  construct a special-simple negative closed walk $C'$ which uses only
  loose arcs and arcs in $A(T)$. To achieve this goal, repeatedly take any
  subpath $C[u,v]$, where $u,v\in V(T)$, but inner vertices of $C[u,v]$ are in
  $V-V(T)$. By the condition $c(C[u,v])\ge -d^T(v,u)$, which means that
  changing $C[u,v]$ to the $uv$-path consisting of loose arcs does not
  increase the length of $C$.  We arrived at a contradiction, as the
  special-simple closed walk $C'$ contains a negative cycle which is
  impossible by the definition of loose arcs.  
\qed\end{proof}



\section{Polynomial algorithms for the case $k_0=1$}
\label{veryrestr}

First we give an $O(n^2)$ algorithm for the very restricted case, where we have
only one negative tree $T$, and moreover it spans $V$. 
We claim first that 
$c$ is nearly conservative on $D$ if and only if for each ordinary arc $uv$
we have $c(uv)\ge -d^T(v,u)$. 
If $c(uv)< -d^T(v,u)$, then we have a negative special-simple closed walk,
so $c$ is not nearly conservative by Lemma \ref{us}.
Suppose now that $c(uv)\ge -d^T(v,u)$ holds for each ordinary arc $uv$,
and $C$ is a negative cycle in $D$ with at least three arcs.
As in the proof of Lemma \ref{konz_1fa}, replace each ordinary arc $uv$ of $C$
by a $uv$-path consisting of loose arcs, this does not increase the length.
We arrive at special-simple closed walk using only special and loose arcs that
is negative. However this contradicts to the definition of loose arcs.
We also got that in this case for any pair $s,t\in V$ the
length of the shortest path is $d^T(s,t)$ by Lemma \ref{kitero}.  Consequently
it is enough to give an $O(n^2)$ algorithm for calculating distances
$d^T(s,t)$. We suppose that $V=\{1,\ldots,n\}$ and initialize a length-$n$
all-zero array $D_u$ for each vertex $u$. Then we fill up these arrays in a
top-down fashion starting from the root vertex $1$. Let $\mathcal{P}$ denote
the subset of vertices already processed, initially it is $\{1\}$. If a parent
$u$ of an unprocessed vertex $v$ is already processed, we process $v$: for
each processed vertex $x$ we set $D_v(x)=c(vu)+D_u(x)$, and set
$D_x(v)=D_x(u)+c(uv)$, and put $v$ into $\mathcal{P}$.

Next we give an $O(n^4)$ algorithm for the case where we have only one
negative tree $T$, but we do not assume it to span $V$.

In digraph $D'=D-A(T)=D_o=(V,A_o)$ using the Floyd-Warshall algorithm (see in
any lecture notes, e.g., in \cite{CLRS}), it is easy to check whether $c$ is
conservative on $D'$ in time $O(n^3)$. If it is not conservative, then we
return with output ``{\sc Not Nearly Conservative}'' (as in this case $c$
clearly cannot be nearly conservative on $D$), and if it is conservative, then
this algorithm also calculates the length $d'(s,t)$ of all shortest paths in
$D'$ (for $s,t\in V$).  If vertex $t$ is not reachable from vertex $s$, then
it gives $d'(s,t)=+\infty$ (remember that reachability is the same in $D'$ as
in $D$).

 Then we calculate the distances
$d^T(u,v)$ in time $O(n^2)$ as in the previous section. By Lemma
\ref{konz_1fa}, $c$ is nearly conservative on $D$ if and only if for all pairs
$u,v\in V_T$ we have $d'(u,v)\ge -d^T(u,v)$, this can be checked in time
$O(n^2)$. It remains to calculate the pairwise distances. If $P$ is a
shortest $st$-path, then it is either a ordinary path (having length
$d'(s,t)$), or 
it has a first arc $uu'\in A(T)$ and a last arc $v'v\in A(T)$. The
part $P[u,v]$ must reside inside $A(T)$ by Lemma \ref{kitero}.

\begin{lemma}\label{dist}
If $c$ is nearly conservative on $D$, and $T$ is the only negative tree,
then the 
distance $d(s,t)$ is
\[d(s,t)=\min\Bigl(d'(s,t), \; \min_{u,v\in
  V_T}[d'(s,u)+d^T(u,v)+d'(v,t)] \Bigr).\] 
\end{lemma}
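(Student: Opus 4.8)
The plan is to establish the two inequalities $d(s,t)\le R$ and $d(s,t)\ge R$ separately, where $R$ denotes the right-hand side of the claimed identity; both directions are short reductions to the lemmas already in hand. The key preliminary observation is that, since $T$ is the only negative tree, $A(T)=A_s$, so $D'=D-A(T)$ is exactly the ordinary subdigraph $D_o=(V,A_o)$ and $d'$ is its distance function. In particular every $D'$-path is a $D$-path, which already gives $d(s,t)\le d'(s,t)$.

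For the rest of the upper bound I would fix $u,v\in V(T)$ (the index set is nonempty since a negative tree has at least two vertices) and glue a shortest $su$-path $Q_1$ in $D'$, the unique $uv$-path $P^T_{uv}$ inside $A(T)$, and a shortest $vt$-path $Q_2$ in $D'$ into the $st$-walk $W=Q_1+P^T_{uv}+Q_2$ of length $d'(s,u)+d^T(u,v)+d'(v,t)$ (if one of these $d'$-distances is $+\infty$ there is nothing to prove). The only special arcs of $W$ lie on $P^T_{uv}$, and since that is a simple path in the tree $T$ it traverses each edge of $F$ at most once and never alongside its reverse; hence $W$ is special-simple. Lemma~\ref{us2} then produces an $st$-path of length at most $c(W)$, so $d(s,t)\le d'(s,u)+d^T(u,v)+d'(v,t)$. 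Taking the minimum over $u,v$ and combining with $d(s,t)\le d'(s,t)$ yields $d(s,t)\le R$.

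For the lower bound, take a shortest $st$-path $P$. If $P$ uses no arc of $A(T)$, then it is a $D'$-path, so $c(P)\ge d'(s,t)\ge R$. Otherwise $P$ uses an arc of $A(T)$, whose two endpoints lie in $V(T)$; letting $u,v$ be the first and last vertices of $P$ in $V(T)$, we have $u\ne v$, and Lemma~\ref{kitero} forces $P[u,v]$ to use only arcs of $A(T)$, so $P[u,v]$ is the tree-path $P^T_{uv}$ and $c(P[u,v])=d^T(u,v)$. By the choice of $u$ and $v$, neither $P[s,u]$ nor $P[v,t]$ meets $V(T)$ except at its endpoint $u$, resp.\ $v$, hence neither uses an arc of $A(T)$; thus both are $D'$-paths, giving $c(P[s,u])\ge d'(s,u)$ and $c(P[v,t])\ge d'(v,t)$. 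Summing the three pieces, $c(P)\ge d'(s,u)+d^T(u,v)+d'(v,t)\ge R$. In either case $d(s,t)=c(P)\ge R$, completing the argument.

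I do not expect a genuine obstacle: the statement is essentially a repackaging of Lemma~\ref{kitero}, which pins the middle segment of any optimal path meeting $T$ to $A(T)$, and Lemma~\ref{us2}, which turns the candidate special-simple walk into an honest path. The only points that need care are the bookkeeping identity $A(T)=A_s$ (so that $D'$ really is $D_o$), the claim that $P[s,u]$ and $P[v,t]$ avoid $A(T)$, and the verification that the glued walk $W$ is special-simple — each immediate from the definitions.
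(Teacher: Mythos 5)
Your proof is correct and follows essentially the paper's own route: the lower bound is exactly the "consequence of Lemma~\ref{kitero}" (split on whether the shortest path uses $A(T)$, pin the middle segment to the tree path, and note the end segments live in $D'$), and the upper bound uses precisely the trick the paper highlights — the glued object $Q_1+P^T_{uv}+Q_2$ is only a special-simple walk, not a path, so Lemma~\ref{us2} is invoked to extract a path of no greater length. No gaps; you have simply written out the details the paper leaves implicit.
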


\begin{proof}
  This is a consequence of Lemma \ref{kitero}.  The trick
  used here is that a
  shortest $su$-path and a shortest $vt$-path in $D'$ need not be
  arc-disjoint, this is the main purpose for which we introduced the notion of
  special-simple, so for the relation LHS$\le$RHS we have to use Lemma
  \ref{us2}.   
\qed\end{proof}

These values can be easily calculated for all pairs in total time $O(n^4)$,
so we are done. 

\section{FPT algorithm for parameter $k_0$}\label{restr}

In this section we suppose that there are
at most $k_0$ negative trees in $D$. 
Let $T_1,\ldots,T_{k_0}$ be the negative trees, remember that we defined
$A(T_i)$ as the set of special arcs that correspond to the edges of $T_i$. We
denote by $V_T$ the vertex set $\bigcup_i V(T_i)$.

First we compute distances  $d^{T_i}$ for
all $1\le i\le k_0$ in total time $\sum O(|V(T_i|^2)=O(n^2)$. 
Next we compute distances $d'$ in
digraph 
$D'=D-\bigcup_i A(T_i)=D_o$ in time $O(n^3)$, or stop if $c$ is not nearly
conservative on $D'$. 
 


We use dynamic programming for the calculation remained.
For all $J\subseteq \{1,\ldots,k_0\}$ we define the \emph{$J$-subproblem} as
follows. Solve the APSP problem  in digraph
$D_J=D-\bigcup\limits_{i\in \{1,\ldots,k_0\}-J} A(T_i)$, and let $d_J$ denote the
corresponding distance function if $c$ is nearly conservative on $D_J$
(otherwise, if $c$ is not nearly conservative on $D_J$ for any $J$, we stop).
We already solved the $\emptyset$-subproblem, $d_\emptyset\equiv d'$.

\begin{lemma}\label{dist2}
Suppose we solved the $(J-i)$-subproblem for every $i\in J$ and found that $c$
is nearly conservative on $D_{J-i}$. By Lemma \ref{konz_1fa}, we can check
whether $c$ is conservative on $D_J$ using only distance functions $d^{T_i}$
and $d_{J-i}$ for one element $i\in J$. If yes, then we have 
\[  d_J(s,t)=
 \min\Bigl(d_\emptyset(s,t),
\min_{i\in J}[
 \min_{u,v\in V(T_i)}(d_\emptyset(s,u)+d^{T_i}(u,v)+d_{J-i}(v,t)]\Bigr)
\]
\end{lemma}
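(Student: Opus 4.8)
The plan is to prove the formula by establishing the two inequalities separately, using the structural insight of Lemma~\ref{kitero} applied to the tree $T_i$ inside $D_J$, together with the special-simple machinery of Lemma~\ref{us2}.

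First I would address the direction $d_J(s,t)\le\text{RHS}$. Here I observe that each of the three candidate quantities on the right is realized by an actual $st$-walk in $D_J$: the term $d_\emptyset(s,t)$ is the length of an ordinary $st$-path (which lives in $D_\emptyset\subseteq D_J$), while for a fixed $i\in J$ the quantity $d_\emptyset(s,u)+d^{T_i}(u,v)+d_{J-i}(v,t)$ is the length of the concatenation of a shortest ordinary $su$-path, the $uv$-path inside $A(T_i)$, and a shortest $vt$-path in $D_{J-i}$. This concatenation is a special-simple $st$-walk: the first piece uses no special arcs at all, the middle piece uses only arcs of $A(T_i)$ and never both an arc and its opposite (it is a path in a tree), and the last piece lives in $D_{J-i}$ which excludes $A(T_i)$, so no special arc of $T_i$ is repeated across pieces; the only subtlety is that arcs could otherwise be shared between the first and third pieces, but since we only need special-simplicity this causes no problem. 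Then Lemma~\ref{us2}, applicable because $c$ is nearly conservative on $D_J$, converts this walk into an $st$-path of no greater length, giving $d_J(s,t)\le\text{RHS}$. This is exactly the point the paper flags in the proof of Lemma~\ref{dist}: the two outer shortest paths need not be arc-disjoint, which is precisely why special-simplicity rather than genuine arc-disjointness is the right notion.

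Next I would prove $d_J(s,t)\ge\text{RHS}$, i.e.\ that some candidate on the right is at most $d_J(s,t)$. Take a shortest $st$-path $P$ in $D_J$. If $P$ uses no arc of any $A(T_i)$ with $i\in J$, then $P$ is an ordinary path, so $d_J(s,t)=c(P)\ge d_\emptyset(s,t)$ and we are done. Otherwise $P$ uses an arc of $A(T_i)$ for some $i\in J$; fix such an $i$, let $u$ be the first vertex of $P$ in $V(T_i)$ and $v$ the last. By Lemma~\ref{kitero} applied to the negative tree $T_i$ inside $D_J$ (noting $c$ is nearly conservative on $D_J$), the middle piece $P[u,v]$ uses only special arcs of $A(T_i)$, hence $c(P[u,v])=d^{T_i}(u,v)$. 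The prefix $P[s,u]$ has first... -- more carefully: $P[s,u]$ is a path reaching $u$ whose only vertex in $V(T_i)$ is $u$ itself, and it is a path in $D_J$, so in particular a walk avoiding $A(T_i)$; I want to bound its length below by $d_\emptyset(s,u)$ and the suffix $P[v,t]$'s length below by $d_{J-i}(v,t)$. The suffix bound is immediate since $P[v,t]$ is an $vt$-path in $D_J$, but it need not avoid $A(T_i)$ — however $P[v,t]$ does avoid $A(T_i)$ as well, because $v$ is the \emph{last} vertex of $P$ in $V(T_i)$, so $P[v,t]$ meets $V(T_i)$ only at $v$, hence uses no arc of $A(T_i)$; thus $P[v,t]$ is a $vt$-path in $D_{J-i}$ and $c(P[v,t])\ge d_{J-i}(v,t)$. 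Similarly $P[s,u]$ meets $V(T_i)$ only at $u$ so avoids $A(T_i)$; but to compare with $d_\emptyset$ I must also know it avoids the other trees' arcs — it need not. Instead I bound $c(P[s,u])\ge d_{J-i}(s,u)$ as well (it is an $su$-path in $D_{J-i}$), and then note $d_{J-i}(s,u)\ge d_\emptyset(s,u)$; combining, $c(P)=c(P[s,u])+d^{T_i}(u,v)+c(P[v,t])\ge d_\emptyset(s,u)+d^{T_i}(u,v)+d_{J-i}(v,t)\ge\text{RHS}$.

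The main obstacle I anticipate is the bookkeeping in the lower-bound direction: correctly identifying which subdigraph each of the three pieces of $P$ lives in, so that replacing $c(P[s,u])$ by $d_\emptyset(s,u)$ (rather than $d_{J-i}(s,u)$) is justified, or else arguing that the slightly weaker bound $d_{J-i}(s,u)\ge d_\emptyset(s,u)$ suffices — as above, it does, since $D_\emptyset\subseteq D_{J-i}$ forces $d_{J-i}\le d_\emptyset$ pointwise and hence the formula with $d_\emptyset(s,u)$ in the first slot is the correct (tighter, yet still valid) one. A secondary point worth stating cleanly is that the minimum in the formula is well-defined: we only take the inner minimum over $i\in J$ for which the $(J-i)$-subproblem was solved and returned "nearly conservative", which by hypothesis is every $i\in J$, and Lemma~\ref{konz_1fa} has already been used to certify that $c$ is nearly conservative on $D_J$, so $d_J$ itself is well-defined. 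With these observations the two inequalities close the proof.
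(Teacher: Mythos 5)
Your upper-bound direction ($d_J(s,t)\le$ RHS) is correct and essentially the paper's argument: each candidate value is realized by a special-simple $st$-walk in $D_J$ (ordinary prefix, tree path in $A(T_i)$, path in $D_{J-i}$), and Lemma \ref{us2} converts it into an $st$-path of no greater length.

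The lower-bound direction, however, has a genuine gap. You fix an \emph{arbitrary} $i\in J$ for which $P$ uses an arc of $A(T_i)$; with that choice the prefix $P[s,u]$ may pass through other trees $T_j$, $j\in J-i$, so the only bound available is $c(P[s,u])\ge d_{J-i}(s,u)$. To get to the stated formula you then invoke $d_{J-i}(s,u)\ge d_\emptyset(s,u)$, which is false in the direction you need: since $D_\emptyset\subseteq D_{J-i}$, one has $d_{J-i}(s,u)\le d_\emptyset(s,u)$ --- as you yourself state in the very next clause --- so your chain only yields $c(P)\ge d_{J-i}(s,u)+d^{T_i}(u,v)+d_{J-i}(v,t)$, which can be strictly smaller than every candidate on the right-hand side (e.g.\ when $P[s,u]$ reaches $u$ using special arcs of another tree more cheaply than any ordinary path could). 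The closing remark that ``the formula with $d_\emptyset(s,u)$ in the first slot is the correct (tighter, yet still valid) one'' is backwards: the version with $d_\emptyset(s,u)$ is the \emph{stronger} claim and does not follow from the weaker bound you proved. The repair is the paper's choice of the index: let $u$ be the first vertex of $P$ in $\bigcup_{j\in J}V(T_j)$ (not merely in $V(T_i)$ for a pre-chosen $i$), let $T_i$ be the tree containing this $u$, and let $v$ be the last vertex of $P$ in $V(T_i)$. Then $P[s,u]$ meets no tree vertex except $u$, hence uses no special arc at all and lies in $D_\emptyset$, giving $c(P[s,u])\ge d_\emptyset(s,u)$; combined with $c(P[u,v])=d^{T_i}(u,v)$ from Lemma \ref{kitero} and $c(P[v,t])\ge d_{J-i}(v,t)$, this closes the argument.
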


\begin{proof}
  First we show that LHS$\ge$RHS. Let $P$ be a shortest path in $D_J$. Either
  $P$ is disjoint from $\bigcup_{j\in J}V(T_j)$, in this case its length is
$d_\emptyset(s,t)$ in graph $D_J$. The other possibility is that $P$ has some
first 
vertex $u$ in  $\bigcup_{j\in J}V(T_j)$, say $u\in V(T_i)$. Let $v$ denote the
last vertex of $P$ in $V(T_i)$. That is, $P[s,u]$ goes inside $D_\emptyset$ 
and $P[v,t]$ goes inside $D_{J-i}$, and, by Lemma \ref{kitero},
$P[u,v]$ goes inside $A(T_i)$.

To show that LHS$\le$RHS we only need to observe that if
$P_1$ is an $su$-path in $D_\emptyset$, $P_2$ is a $uv$-path in $A(T_i)$, and
$P_3$ is a $vt$-path in $D_{J-i}$, then $P_1+P_2+P_3$ is a special-simple
$st$-walk. 
\end{proof}

Remember that 'solving the APSP problem' is defined in this paper as first
checking nearly conservativeness, and if $c$ is nearly conservative, then
calculate all shortest paths. As solving one subproblem needs $O(n^4)$ steps,
we proved the following 

\begin{theorem}\label{thm_restr}
If $D$ has $k_0$ negative trees, then the dynamic programming algorithm
given in this section correctly solves the APSP problem in time $O(2^{k_0}\cdot
n^4)$. 
\end{theorem}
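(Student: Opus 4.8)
The plan is to show that the dynamic programming recursion over all subsets $J\subseteq\{1,\ldots,k_0\}$ is both correct and runs within the claimed time bound, so the argument naturally splits into a correctness part and a running-time part.

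For \textbf{correctness}, I would argue by induction on $|J|$, processing the subsets in order of increasing cardinality. The base case $J=\emptyset$ is already handled: $D_\emptyset=D'=D_o$ has no special arcs at all, so Floyd--Warshall both certifies conservativeness and computes $d_\emptyset\equiv d'$ in time $O(n^3)$. For the inductive step, fix $J\ne\emptyset$ and pick any $i\in J$. By the inductive hypothesis we have already solved the $(J-i)$-subproblem, so we know whether $c$ is nearly conservative on $D_{J-i}$; if it is not, then it cannot be nearly conservative on $D_J\supseteq D_{J-i}$ either (adding the arcs of $A(T_i)$ only creates more closed walks), and we stop. If it is, then $D_{J-i}=D_J-A(T_i)$ and $T_i$ is a negative tree in $D_J$, so Lemma~\ref{konz_1fa} applies verbatim: $c$ is nearly conservative on $D_J$ iff $d_{J-i}(u,v)\ge -d^{T_i}(v,u)$ for all $u,v\in V(T_i)$, which is an $O(n^2)$ check. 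Once this test passes, the distance formula is exactly Lemma~\ref{dist2} with the roles played by $D_\emptyset$, $A(T_i)$, and $D_{J-i}$; note that the lemma's proof only needed that the $(J-i)$-subproblem was solved and near-conservative, which is precisely the inductive hypothesis. Evaluating the formula for all ordered pairs $s,t$ takes $O(n^2\cdot\sum_{i\in J}|V(T_i)|^2)=O(n^4)$ time per subset $J$.

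For the \textbf{running time}, the preprocessing (forming $F$, checking it is a forest, adding loose arcs, computing all $d^{T_i}$, and running Floyd--Warshall once on $D_o$) costs $O(n^3)$. There are $2^{k_0}$ subproblems, and by the paragraph above each costs $O(n^4)$ (dominated by the evaluation of the distance formula; the conservativeness check is only $O(n^2)$ and picking the index $i$ is trivial). Summing over all subsets gives $O(2^{k_0}\cdot n^4)$, which absorbs the preprocessing term. One should also check that the subproblems can actually be enumerated in an order consistent with the induction — e.g. by iterating over $|J|=0,1,\ldots,k_0$ — and that storing all $2^{k_0}$ distance tables $d_J$ is acceptable; since we only ever need $d_\emptyset$ and $d_{J-i}$ for a single $i$, one could in principle be more frugal, but for the stated bound this is immaterial.

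The main obstacle, such as it is, is not analytical but bookkeeping: making sure that at every subset $J$ the needed near-conservativeness of $D_{J-i}$ has already been established, so that Lemmas~\ref{konz_1fa} and~\ref{dist2} may legitimately be invoked, and that a failure at any subproblem correctly propagates to the global answer ``Not Nearly Conservative'' — which it does, since $c$ nearly conservative on $D=D_{\{1,\ldots,k_0\}}$ forces it to be nearly conservative on every subdigraph $D_J$. All the genuine mathematical content has already been extracted into the lemmas of Sections~\ref{lemmas} and~\ref{veryrestr}, so the proof of the theorem is essentially the observation that the recursion is well-founded and that each step costs $O(n^4)$.
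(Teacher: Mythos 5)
Your proof is correct and follows essentially the same route as the paper: the theorem is just the summary of the dynamic programming over subsets $J$, with Lemma~\ref{konz_1fa} supplying the near-conservativeness test, Lemma~\ref{dist2} supplying the recursion, and an $O(n^4)$ cost per subproblem giving $O(2^{k_0}\cdot n^4)$ overall. Your write-up merely makes explicit the induction on $|J|$ and the propagation of a ``Not Nearly Conservative'' verdict, which the paper leaves implicit.
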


The weak blocks of a digraph refer to the 2-connected blocks of the underlying
undirected graph. It is well known that the block-tree of an undirected graph
can be determined in time $O(n^2)$ by DFS. If we have this decomposition and
we also calculated APSP inside every weak block, then we can also calculate
APSP for the whole digraph in additional time $O(n^3)$. Consequently we have

\begin{corollary}\label{cor_weakblock}
If every weak block of $D$ contains at most $k_0'$ negative trees, then we can
solve the APSP problem in time $O(2^{k_0'}\cdot n^4)$. 
\end{corollary}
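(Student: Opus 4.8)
The plan is to exploit the standard block--cut decomposition of the underlying undirected graph of $D$, combined with the fact (noted in the excerpt) that each negative tree lies inside a single strongly connected component, and hence inside a single weak block. First I would invoke Theorem~\ref{thm_restr}: in each weak block $B$ there are at most $k_0'$ negative trees, so the APSP problem inside $B$ can be solved in time $O(2^{k_0'}\cdot |V(B)|^4)$; since the blocks are edge-disjoint and $\sum_B |V(B)| = O(n)$ up to the number of blocks (each cut vertex is shared by several blocks, but $\sum_B (|V(B)|-1) \le n-1 + (\text{number of blocks}) \le 2n$), the total work over all blocks is $O(2^{k_0'}\cdot n^4)$. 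The block--cut tree itself is computed in $O(n^2)$ time by DFS, as the excerpt states.

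The substantive step is to argue that once we know, for every weak block, whether $c$ is nearly conservative on that block and (if so) all intra-block distances, we can in $O(n^3)$ additional time decide nearly conservativeness of $D$ and compute all distances $d_D(s,t)$. For nearly conservativeness: by Lemma~\ref{us}, a violation is a negative special-simple closed walk; I would observe that such a closed walk, being a closed walk, decomposes into closed walks each living inside a single weak block (a closed walk cannot cross a cut vertex without returning through it, and one can split at the cut vertex), and a negative special-simple closed walk must therefore contain a negative special-simple closed walk inside some single block; conversely a violation inside a block is a violation in $D$. Hence $c$ is nearly conservative on $D$ iff it is nearly conservative on every weak block. For the distances: any $st$-path in $D$ has an underlying undirected path in the block--cut tree through a unique sequence of blocks $B_1,\ldots,B_r$ meeting at cut vertices $w_1,\ldots,w_{r-1}$; by Lemma~\ref{kitero} the portion of a shortest $st$-path inside each negative tree stays inside that tree, and each negative tree sits in one block, so the shortest $st$-path length is obtained by concatenating shortest (within-block) paths $s\to w_1 \to \cdots \to w_{r-1}\to t$. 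This yields a recursion over the block--cut tree that is solved by a single pass (root the tree, compute distances from each vertex to the cut vertices of its blocks, then combine); with $O(n^2)$ pairs and $O(n)$ blocks on any tree path, a careful dynamic program runs in $O(n^3)$.

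The main obstacle I anticipate is the subtlety already flagged in Lemma~\ref{dist} and Lemma~\ref{dist2}: the within-block shortest subpaths we concatenate need not be vertex-disjoint outside the cut vertices, so na\"ively gluing ``paths'' only gives a closed-walk-free special-simple \emph{walk}, not a path. I would handle this exactly as the earlier lemmas do --- build the concatenation as a special-simple $st$-walk and then apply Lemma~\ref{us2} to extract an $st$-path of no greater length --- so the combining formula gives an upper bound on $d_D(s,t)$, while the lower bound comes from Lemma~\ref{kitero} applied to an actual shortest path restricted to each block. A second, more bookkeeping-type obstacle is making the $O(n^3)$ combining step genuinely $O(n^3)$ rather than $O(n^4)$: one must avoid recomputing block-internal all-pairs information, instead storing for each block only distances to and from its (few) cut vertices and then running the tree recursion, which needs $O(n)$ distance values per block and $O(n)$ blocks along any root-to-leaf path. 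With these two points addressed, Corollary~\ref{cor_weakblock} follows immediately by adding the $O(2^{k_0'}\cdot n^4)$ per-block cost, the $O(n^2)$ decomposition cost, and the $O(n^3)$ combining cost.
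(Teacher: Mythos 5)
Your proposal follows the same route as the paper, which in fact gives almost no detail beyond it: compute the block--cut tree in $O(n^2)$, apply Theorem~\ref{thm_restr} inside each weak block, and combine along the block tree in $O(n^3)$, so your write-up mainly supplies the combining argument the paper leaves implicit. One assertion you make is, however, false: a negative tree need \emph{not} lie inside a single weak block. Residing in one strongly connected component does not force this --- take $V=\{u,w,v\}$ with all four arcs $uw,wu,wv,vw$ special; this is strongly connected, the negative tree is the path $u-w-v$, yet $w$ is a cut vertex and the tree straddles two blocks. Fortunately this claim is not load-bearing: the blockwise decomposition of a shortest $st$-path follows directly from the fact that a simple path crosses the unique block sequence $B_1,\dots,B_r$ through the cut vertices $w_1,\dots,w_{r-1}$ (which you also state), not from confining negative trees to blocks; similarly, near-conservativeness is blockwise simply because every negative cycle with at least three arcs is a directed cycle and hence lies in one block, and a two-arc violation lies on one undirected edge. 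Also, your worry about non-disjoint within-block subpaths is moot: distinct blocks on the tree path share only the consecutive cut vertices (non-consecutive blocks are vertex-disjoint, else the block tree would contain a cycle), so the concatenation is already a path and Lemma~\ref{us2} is not needed. With the erroneous side-claim removed, your argument is correct and matches the paper's intended proof.
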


\section{General FPT algorithm for parameters $k_1$ and
 $k_2$}\label{gen}

Suppose every strongly connected component of $D$ contains at most $k_1$
negative trees.
By the previous section we can solve the  APSP problem inside each strongly
connected component in total time $O(2^{k_1}\cdot n^4)$. If for any of them we
found that $c$ is not nearly conservative, then we stop and report the fact
that $c$ is not nearly conservative on $D$. Henceforth in this section we
assume that 
for every strongly connected component $K$ of $D$,
$c$ is nearly conservative on $K$. (In this situation clearly $c$ is nearly
conservative on $D$.)
The distance function
restricted to component $K$  is denoted by $d_K$. If $s,t\in V(K)$, then every
$st$-path goes inside $K$, thus $d(s,t)=d_K(s,t)$.
It remains to calculate
APSP in $D$ for pairs $s,t$, that are in different strongly connected
components.  

We construct a new acyclic digraph $D^*$ by first substituting every
strongly connected component $K$ by acyclic digraph $D^*_K$ as
follows. Suppose $V(K)=\{x_1^K, x_2^K,\ldots, x_r^K\}$, the vertex set of
$D^*_K$ will consist of $2r$ vertices, $\{a_1^K, a_2^K,\ldots, a_r^K,$
$b_1^K, b_2^K,\ldots, b_r^K\}$. For each $1\le i,j\le r$ the digraph $D^*_K$ 
contains arc $a_i^Kb_j^K$ with length $d_K(x_i^K,x_j^K)$.

In order to finish the construction of $D^*$, for every arc $x_i^Kx_j^L$ of
$D$ connecting two different strongly connected components $K\ne L$, digraph
$D^*$ contains the arc $b_i^Ka_j^L$ with length $c(x_i^Kx_j^L)$. It is easy
to see that $D^*$ is truly acyclic and has $2n$ vertices. As $D^*$ is a
simple digraph, paths can be given by only listing the sequence of its
vertices.  We can calculate APSP in $D^*$ in time $O(n^3)$ by the method of
Mor\'avek \cite{M} (see also in \cite{CLRS}) if we run this famous algorithm
from all possible sources $s$.  It gives distance function $d_{D^*}$ (where
if $t$ is not reachable from $s$, then we write $d_{D^*}
(s,t)=+\infty$). The total running time is still $O(2^{k_1}\cdot n^4)$. We
remark that if every strongly connected component  has
a spanning negative tree, then the running time is $O(n^3)$.

\begin{theorem}\label{thm_gen}
Suppose $s=x_{i_0}^{K_0}\in V(K_0)$ and $t=x_{j_r}^{K_r}\in V(K_r)$ where
$K_0\ne K_r$ are different strongly connected components of $D$. Then the
shortest $st$-path in $D$ has length exactly
$d_{D^*}(a_{i_0}^{K_0},b_{j_r}^{K_r})$. 
\end{theorem}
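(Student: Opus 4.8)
The plan is to prove the equality by translating paths back and forth between $D$ and $D^*$; the structural fact used everywhere is that the condensation of $D$ (the DAG whose vertices are the strongly connected components) has no directed cycle, so no directed path — in $D$ or in $D^*$ — can return to a strongly connected component it has already left. First I would dispose of the degenerate case: if $t$ is not reachable from $s$ in $D$ then no $st$-walk exists, and conversely (by the construction below) any $a_{i_0}^{K_0}b_{j_r}^{K_r}$-path in $D^*$ yields an $st$-walk in $D$; hence in that case both sides equal $+\infty$. So assume reachability, in which case both the shortest $st$-path in $D$ and the shortest $a_{i_0}^{K_0}b_{j_r}^{K_r}$-path in $D^*$ exist (finitely many simple paths, resp.\ finitely many paths since $D^*$ is acyclic).

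For the inequality $d_{D^*}(a_{i_0}^{K_0},b_{j_r}^{K_r})\le d(s,t)$, take a shortest $st$-path $P$ in $D$. Because the condensation is acyclic, the vertices of $P$ lying in any fixed strongly connected component form a contiguous subpath of $P$: if $P$ visited $u\in K$, then some $w\notin K$, then again some $v\in K$ in this order, then $P[u,w]$ shows $w$ reachable from $u$, $P[w,v]$ shows $v$ reachable from $w$, and $u$ is reachable from $v$ (same component), forcing $w\in K$. Thus $P$ decomposes as $P_0+e_1+P_1+e_2+\cdots+e_r+P_r$, where $P_m$ is a subpath of $P$ inside the $m$-th component $K_m$ (with $K_0,K_r$ the given ones) from its first vertex $x^{K_m}_{q_m}$ to its last vertex $x^{K_m}_{p_m}$ (and $q_0=i_0$, $p_r=j_r$), and each $e_m$ is a single arc of $D$ from $K_{m-1}$ to $K_m$. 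Each $P_m$ is a path inside $K_m$, so $c(P_m)\ge d_{K_m}(x^{K_m}_{q_m},x^{K_m}_{p_m})$, which is exactly the length of the arc $a^{K_m}_{q_m}b^{K_m}_{p_m}$ of $D^*$; and each $e_m$ corresponds to a connecting arc of $D^*$ of equal length. Concatenating these arcs gives a walk in $D^*$ from $a_{i_0}^{K_0}$ to $b_{j_r}^{K_r}$ of total length $\le c(P)=d(s,t)$; since $D^*$ is acyclic this walk is a path, so $d_{D^*}(a_{i_0}^{K_0},b_{j_r}^{K_r})\le d(s,t)$.

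For the reverse inequality, take a shortest $a_{i_0}^{K_0}b_{j_r}^{K_r}$-path $\Pi$ in $D^*$ and read off its shape: every $a$-vertex has out-arcs only to $b$-vertices of its own gadget, and every $b$-vertex has out-arcs only to $a$-vertices of other gadgets, so $\Pi$ alternates internal arcs $a^{K}_{i}b^{K}_{j}$ with connecting arcs and passes through the gadgets of a sequence of strongly connected components $K_0,K_1,\ldots,K_r$, which — again since the condensation is acyclic — are pairwise distinct. Replace each internal arc $a^{K_m}_{i}b^{K_m}_{j}$ by a shortest $x^{K_m}_{i}x^{K_m}_{j}$-path, which exists, has length $d_{K_m}(x^{K_m}_{i},x^{K_m}_{j})$, and stays inside $K_m$ (here we use that $c$ is nearly conservative on $K_m$); replace each connecting arc $b^{K_m}_{j}a^{K_{m+1}}_{i}$ by the arc $x^{K_m}_{j}x^{K_{m+1}}_{i}$ of $D$. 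Concatenating yields an $st$-walk $W$ in $D$ with $c(W)=c(\Pi)=d_{D^*}(a_{i_0}^{K_0},b_{j_r}^{K_r})$; and since $K_0,\ldots,K_r$ are distinct, the inserted within-component paths occupy pairwise disjoint vertex sets, so $W$ is in fact a simple $st$-path and $d(s,t)\le c(W)$. The two inequalities together give the claim.

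I expect the only real subtlety to be the two bookkeeping claims that rely on acyclicity of the condensation of $D$: that a shortest path of $D$ meets each strongly connected component in a single contiguous block, and that a directed path of $D^*$ traverses gadgets of pairwise distinct components. Both are short once set up carefully, and — in contrast to Lemmas~\ref{dist} and~\ref{dist2} — neither direction here needs the special-simple machinery of Section~\ref{lemmas}, since the path pieces assembled inside $D$ automatically occupy disjoint vertex sets. The rest is just arithmetic with arc lengths.
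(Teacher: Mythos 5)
Your proof is correct and follows essentially the same route as the paper's: decompose a shortest $st$-path of $D$ by strongly connected components and translate it into a path of $D^*$, and conversely expand each arc $a^{K}_{i}b^{K}_{j}$ of a shortest $D^*$-path into a shortest path inside $K$. You merely make explicit the bookkeeping (contiguity of visits to a component, alternation and distinctness of gadgets in $D^*$) that the paper leaves implicit, and you use the inequality $c(P_m)\ge d_{K_m}$ where the paper asserts the subpath is itself shortest; both variants are fine.
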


\begin{proof}
  Vertex $t$ is not reachable from $s$ in $D$ if and only if $b_{j_r}^{K_r}$
  is not reachable from $a_{i_0}^{K_0}$ in $D^*$. Otherwise, suppose
  that $a_{i_0}^{K_0},b_{j_0}^{K_0},
  a_{i_1}^{K_1},b_{j_1}^{K_1},\ldots,a_{i_r}^{K_r},b_{j_r}^{K_r}$ is a
  shortest path $P$ in $D^*$. For $0\le \ell\le r$ let path $P_\ell$ be a
  shortest path in $D$ from $x_{i_\ell}^{K_\ell}$ to $x_{j_\ell}^{K_\ell}$,
  this path obviously goes inside $K_\ell$. We can construct an $st$-path $Q$ in
  $D$ with the same length as $P$ has in $D^*$: $Q=P_0+
  x_{j_0}^{K_0}x_{i_1}^{K_1}+ P_1+ x_{j_1}^{K_1}x_{i_2}^{K_2}+ P_2+ \ldots +
  P_{r-1}+ x_{j_{r-1}}^{K_{r-1}}x_{i_r}^{K_r}+ P_r$.

  For the other direction, suppose that there are strongly connected
  components $K_0, K_1, \ldots, K_r$, such that the shortest $st$-path $Q$ in
  $D$ meets these components in this order, and for all $\ell$ the path $Q$
  arrives into $K_\ell$ at vertex $x_{i_\ell}^{K_\ell}$ and leaves $K_\ell$ at
  vertex $x_{j_\ell}^{K_\ell}$. As $Q$ is a shortest path it clearly contains
  a path of length $d_{K_\ell}(x_i{_\ell}^{K_\ell},x_{j_\ell}^{K_\ell})$
  inside $K_\ell$ for each $\ell$, consequently the following path has the
  same length in $D^*$: $P= a_{i_0}^{K_0},b_{j_0}^{K_0},
  a_{i_1}^{K_1},b_{j_1}^{K_1},\ldots, a_{i_r}^{K_r},b_{j_r}^{K_r}$.
  \qed\end{proof}

Using Corollary \ref{cor_weakblock} we easily get the following more general
statements.

\begin{corollary}\label{cor_weakblock_gen}
If every weak block of any strongly connected component of $D$ contains at
most $k_2$ negative trees, then we can solve the APSP problem in time
$O(2^{k_2}\cdot n^4)$. 
\end{corollary}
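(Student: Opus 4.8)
The plan is to combine Corollary~\ref{cor_weakblock} with the strongly-connected-component reduction used in the proof of Theorem~\ref{thm_gen}. Recall from Section~\ref{restr} and Corollary~\ref{cor_weakblock} that if a digraph has the property that each of its weak blocks contains at most $k$ negative trees, then APSP in that digraph can be solved in time $O(2^k\cdot n^4)$: one solves the APSP subproblem inside each weak block (each such block, having at most $k$ negative trees, is handled by Theorem~\ref{thm_restr} in time $O(2^k\cdot n^4)$), and then one reassembles the block-tree in additional time $O(n^3)$. The hypothesis of the present corollary is that every weak block of every strongly connected component of $D$ contains at most $k_2$ negative trees; so Corollary~\ref{cor_weakblock} applies, verbatim, to each strongly connected component $K$ of $D$ separately.

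First I would run the preprocessing of Section~\ref{sec_defi}: add loose arcs, build the forest $F$, and if $F$ is not a forest stop with the answer ``{\sc Not Nearly Conservative}''. Next I would compute the strongly connected components $K_1,\ldots,K_m$ of $D$ in time $O(n^2)$. For each $K_j$, since a negative tree resides entirely within one strongly connected component (noted in Section~\ref{sec_defi}), every negative tree of $K_j$ is a negative tree of $D$ lying inside $K_j$, and each weak block of $K_j$ contains at most $k_2$ of them by hypothesis; hence Corollary~\ref{cor_weakblock} solves the APSP problem inside $K_j$ in time $O(2^{k_2}\cdot |V(K_j)|^4) = O(2^{k_2}\cdot n^4)$, either certifying that $c$ is nearly conservative on $K_j$ and returning the distance function $d_{K_j}$, or reporting non-conservativeness. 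If any component reports non-conservativeness, I would stop and report that $c$ is not nearly conservative on $D$. Summing over the at most $n$ components, this phase costs $O(2^{k_2}\cdot n^4)$ in total (using $\sum_j |V(K_j)|^4 \le n^4$ is wasteful but enough; more carefully $\sum_j |V(K_j)| = n$ already suffices up to the $n^4$ bound).

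Once every $K_j$ is known to be nearly conservative, $c$ is nearly conservative on $D$ (a negative cycle of $D$ with at least three arcs would lie in a single strongly connected component, contradicting near-conservativeness there; a negative two-cycle would be a negative cycle inside some $K_j$ as well — indeed its two vertices lie in one component). It remains to compute $d(s,t)$ for $s,t$ in different components. For this I would apply exactly the construction and argument of Theorem~\ref{thm_gen}: build the acyclic digraph $D^*$ on $2n$ vertices by replacing each $K_j$ by the bipartite gadget $D^*_{K_j}$ whose arc $a_i^{K_j}b_\ell^{K_j}$ carries length $d_{K_j}(x_i^{K_j},x_\ell^{K_j})$, and adding for each inter-component arc $x_i^{K}x_\ell^{L}$ of $D$ the arc $b_i^K a_\ell^L$ with its original length; then run the acyclic-digraph shortest-path algorithm of Mor\'avek from every source in total time $O(n^3)$. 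Theorem~\ref{thm_gen} then gives $d(s,t) = d_{D^*}(a_{i_0}^{K_0}, b_{j_r}^{K_r})$ for $s = x_{i_0}^{K_0}$, $t = x_{j_r}^{K_r}$ in distinct components $K_0 \ne K_r$, with the convention that an infinite value records non-reachability. Combining the two phases, the total running time is $O(2^{k_2}\cdot n^4) + O(n^3) = O(2^{k_2}\cdot n^4)$, as claimed.

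There is essentially no new obstacle here: the corollary is a bookkeeping composition of two already-proved results, Corollary~\ref{cor_weakblock} (applied per component) and Theorem~\ref{thm_gen} (the component reduction). The only points that need a sentence of care are (i) observing that near-conservativeness of each strongly connected component implies near-conservativeness of the whole digraph, which follows since any negative cycle, having all its vertices mutually reachable along it, lies within a single component; and (ii) checking that the per-component running times sum to $O(2^{k_2}\cdot n^4)$, which is immediate from $\sum_j |V(K_j)| = n$.
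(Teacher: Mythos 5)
Your proposal is correct and follows essentially the same route as the paper: apply Corollary~\ref{cor_weakblock} inside each strongly connected component (each of whose weak blocks has at most $k_2$ negative trees), and then combine the components via the acyclic digraph $D^*$ construction and Theorem~\ref{thm_gen}, exactly as the paper intends when it says the corollary follows from Corollary~\ref{cor_weakblock} together with the machinery of Section~\ref{gen}. The two points you flag for care (near-conservativeness of all components implying near-conservativeness of $D$, and the running-time summation) are also handled the same way in the paper.
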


\begin{corollary}
  If there is an absolute constant $\gamma$, such that in any weak block of
  any strongly connected component of $D$ there are at most $\gamma$ negative
  trees, then there is a polynomial time algorithm for the APSP problem that
  runs in time $O_\gamma(n^4)$. 
\end{corollary}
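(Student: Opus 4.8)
The plan is to obtain this immediately as the $k_2=\gamma$ instance of Corollary~\ref{cor_weakblock_gen}. First I would observe that the hypothesis here is literally that of Corollary~\ref{cor_weakblock_gen}, the only difference being that the bound on the number of negative trees per weak block of a strongly connected component is now a fixed constant $\gamma$ rather than an input-dependent parameter. Hence the algorithm underlying that corollary already solves the APSP problem in the sense used throughout the paper — decide whether $c$ is nearly conservative, and if so report all distances $d_D(s,t)$ — in time $O(2^\gamma\cdot n^4)$.

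Second, I would note that because $\gamma$ is an absolute constant, the factor $2^\gamma$ is a constant, and absorbing it into $O_\gamma(\cdot)$ yields running time $O_\gamma(n^4)$, which is polynomial in $n=|V|$. If one prefers not to invoke the corollary as a black box, one can simply unwind the pipeline: decompose $D$ into strongly connected components; decompose each of those into weak blocks via the block-tree (time $O(n^2)$ by DFS); run the dynamic program of Theorem~\ref{thm_restr} inside each weak block with $k_0\le\gamma$, so that $2^{k_0}\le 2^\gamma=O_\gamma(1)$; reassemble APSP along the block-tree in $O(n^3)$ and then across the strongly connected components via the acyclic auxiliary digraph $D^*$ of Section~\ref{gen} in $O(n^3)$. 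Every stage is polynomial once $\gamma$ is fixed, and the exponential factor is bounded by the constant $2^\gamma$. The only point that needs any care — and it is not really an obstacle — is that $\gamma$ must be fixed in advance and must not be allowed to grow with the instance, since otherwise $2^\gamma n^4$ would cease to be polynomial; with $\gamma$ genuinely constant there is nothing further to prove.
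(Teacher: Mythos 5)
Your proposal is correct and matches the paper's intent exactly: the corollary is stated as an immediate consequence of Corollary~\ref{cor_weakblock_gen} with $k_2=\gamma$, where the factor $2^\gamma$ becomes a constant absorbed into $O_\gamma(\cdot)$. Nothing further is needed.
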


\section{Finding the paths}\label{paths}

In this section we assume that $c$ is nearly conservative on $D$.

We usually are not only interested in the lengths of the shortest paths, but
also some (implicit) representation of the paths themselves. The requirement
for this representation is that for any given $s$ and $t$, one shortest
$st$-path $P$ must be computable from it in time $O(\ell)$ if $\ell$ is the
number of arcs in $P$.  

It is well known (see e.g., in \cite{CLRS}) that both the algorithm of Floyd
and Warshall and the algorithm of Mor\'avek  can compute predecessor matrices
$\Pi$ (by increasing the running time by a constant factor only), with the
property that for each $s\ne t$ the entry $\Pi(s,t)$ points to the
last-but-one vertex of a shortest $st$-path. This representation clearly
satisfies the requirement described in the previous paragraph.  

For a digraph $H$ let $\Pi_H$ denote the predecessor matrix of this type, and
suppose that for each strongly connected component $K$ we computed
$\Pi_K$, and we also computed $\Pi_{D^*}$. Then $\Pi_D$ is easily
computable as follows. Suppose that $s=x_{i_0}^{K_0}$ and $t=x_{j_r}^{K_r}$,
and $\Pi_{D^*}(a_{i_0}^{K_0},b_{j_r}^{K_r})=a_{i_r}^{K_r}$. If $i_r\ne j_r$, then
define $\Pi_D(s,t)=\Pi_{K_r}(x_{i_r}^{K_r},x_{j_r}^{K_r})$, otherwise let
$b_{j_{r-1}}^{K_{r-1}}=\Pi_{D^*}(a_{i_0}^{K_0},a_{i_r}^{K_r})$ and define
$\Pi_D(s,t)=x_{j_{r-1}}^{K_{r-1}}$. 

It remained to compute the predecessor matrices $\Pi_K$ in the case where $K$
is a strongly connected component of $D$. In accordance with Section \ref{restr}
from now on we call $K$ as $D$ (and forget the other vertices of the digraph),
and the matrix we are going to determine is simply $\Pi$. 

If $s$ and $t$ are vertices of the same negative tree $T_i$, then the method
given in the first paragraph of Section \ref{veryrestr} easily calculates
$\Pi(s,t)=\Pi_{A(T_i)}(s,t)$. Next we
call the Floyd-Warshall algorithm on $D'$, and it
can give $\Pi_{D'}$, then during the dynamic programming algorithm 
we determine matrices
$\Pi_{D_J}$ for all $J$. 

Given $s$ and $t$, 
by Lemma \ref{dist2} if the minimum is $d_\emptyset(s,t)$, then 
$\Pi_{D_J}(s,t)=\Pi_{D_\emptyset}(s,t)$, otherwise we find $i,u,v$ giving the
minimum value. If $v\ne t$, then $\Pi_{D_J}(s,t)=\Pi_{D_{J-i}}(s,t)$,
otherwise if $v=t$ but $u\ne v$, then $\Pi_{D_J}(s,t)=\Pi_{A(T_i)}(s,t)$,
and finally if $v=t=u\ne s$ then $\Pi_{D_J}(s,t)=\Pi_{D_\emptyset}(s,t)$.

Extending this setup for weak blocks is obvious.

\section{Conclusion and open problems}\label{conclusion}

We gave FPT algorithms for the NP-hard 
APSP problem in nearly conservative graphs
regarding with various parameters.  

For mixed graphs we have the following consequence. As nonnegative undirected
edges can be replaced by two opposite arcs, we may assume that every
undirected edge has negative length. Here the negative trees are the
nontrivial components made up by undirected edges, and APSP problem is to
check whether $c$ is conservative on a mixed graph $G$, and if {\sc
Yes}, then calculate the pairwise distances.

Remember, that for mixed graphs the APSP problem contains checking
conservativeness, and if $c$ is conservative on the mixed graph, then all
shortest paths should be calculated.

\begin{corollary}\label{mixed}
If every weak block of any strongly connected component of a mixed graph 
contains at
most $k_2$ negative trees, then we can solve the APSP problem in time
$O(2^{k_2}\cdot n^4)$. 
\end{corollary}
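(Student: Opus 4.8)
The plan is to derive this from Corollary~\ref{cor_weakblock_gen} via the mixed‑graph‑to‑digraph reduction already described in the Introduction, the only extra work being to track how the parameter behaves. First I would reduce to the case in which every undirected edge of $G$ has negative weight: any nonnegative undirected edge $e=uv$ can be replaced by the two opposite arcs $uv,vu$ of weight $c(e)$ without changing whether $c$ is conservative and without changing any distance. Then I apply the reduction from the Introduction: replace each (now negative) undirected edge $e=uv$ by the two arcs $uv$ and $vu$ with $c(uv)=c(vu)=c(e)$, call the resulting digraph $D$, and run the preprocessing of Section~\ref{sec_defi} on it. By the observation in the Introduction, $c$ is conservative on $G$ if and only if $c$ is nearly conservative on $D$, and in the affirmative case $d_D(s,t)$ equals the shortest‑path length in $G$ for all $s,t$; so it suffices to solve the APSP problem on $D$ within the stated time bound.

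Next I would identify the negative trees of $D$. For a pair $u\ne v$, the arc $uv$ of $D$ is special exactly when both $uv$ and $vu$ are arcs with $c(uv)+c(vu)<0$; for the arcs coming from an undirected edge $e=uv$ this sum equals $2c(e)$, which is negative precisely because $c(e)<0$, and the only other way to obtain a special pair is an original negative $2$‑cycle of $G$, which makes $c$ nonconservative and is caught during preprocessing. Hence (whenever we do not already stop) the auxiliary graph $F$ of Section~\ref{sec_defi} is exactly the subgraph of $G$ spanned by its undirected edges, so the negative trees of $D$ are precisely the nontrivial components of $G$ made up of undirected edges, as asserted in the Conclusion.

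It remains to check that the parameter is preserved. The underlying undirected graph of $D$ coincides with that of $G$ (each undirected edge of $G$ contributes, after we discard parallel copies, one undirected edge; each arc of $G$ contributes one), so $D$ and $G$ have the same weak $2$‑connected blocks. Likewise, a directed walk in $D$ corresponds to a walk in $G$ that uses arcs in their direction and undirected edges in either direction, and conversely, so $D$ and $G$ have the same reachability relation and hence the same strongly connected components. Consequently every weak block of every strongly connected component of $D$ contains at most $k_2$ negative trees. Finally $|V(D)|=n$ and, after preprocessing, $D$ is simple with $O(n^2)$ arcs, so Corollary~\ref{cor_weakblock_gen} applied to $D$ solves the APSP problem in time $O(2^{k_2}\cdot n^4)$, which is what we wanted. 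The only genuinely delicate point in this argument is this last step — verifying that ``weak $2$‑connected block'' and ``strongly connected component'' really do transfer across the transformation and that no extra negative trees are created — but both checks are immediate from the definitions.
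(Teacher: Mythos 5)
Your overall route is exactly the one the paper intends: replace nonnegative undirected edges by opposite arc pairs, apply the Introduction's reduction so that the negative trees of the resulting digraph $D$ are the nontrivial undirected components of $G$, check that strongly connected components and weak blocks are unchanged, and invoke Corollary~\ref{cor_weakblock_gen}; the paper itself gives no more than the remarks in the Conclusion, so your write-up is if anything more careful than the original.

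One sub-claim, however, is not justified as stated: you dismiss special pairs arising from two original arcs of $G$ (a negative directed $2$-cycle, or an arc together with a cheaper opposite arc surviving after loose-arc deletion) by saying this situation ``is caught during preprocessing.'' It is not: the preprocessing of Section~\ref{sec_defi} merely records such a pair as an edge of $F$, and if $F$ remains a forest the algorithm happily continues, treating it as (part of) a negative tree. In that case $c$ can be nearly conservative on $D$ while $G$ has a negative cycle, so without an extra step the reduction would wrongly report $G$ conservative and would also break your identification of the negative trees with the undirected components of $G$. The fix is easy and worth stating explicitly: when forming $D$, check for every pair $u\ne v$ whether a negative $2$-cycle of the mixed graph uses at least one original arc (an $O(|A|)$ scan), and if so stop with ``not conservative.'' With that one sentence added, your proof is complete and coincides with the paper's intended argument; note that the Introduction's ``easy observation'' equating conservativeness of $G$ with near-conservativeness of $D$ silently assumes this same case has been excluded.
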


Finally we pose three open problems. A weight function is
even-nearly conservative if every negative cycle consist of an even number of
arcs. 

\begin{question}
  Is there an FPT algorithm for shortest paths if $c$ is  3-nearly
conservative? (The parameter should not contain the number of negative
triangles.)
\end{question}

\begin{question}
 Is there a polynomial or FPT algorithm for recognizing  even-nearly
 conservative  weights? This would be interesting even if we restrict the
 digraph to be symmetric (i.e., every arc has its opposite). 
\end{question}

\begin{question}
  Is there an FPT algorithm for shortest paths if $c$ is
  $\lambda$-nearly 
conservative, using some parameter $k$ of ``inconvenient components''
(should be defined accordingly) and also
$\lambda$? 
\end{question}

\section*{Acknowledgment}

The author is thankful  to Andr\'as Frank who asked a special case of this
problem, and also to D\'aniel Marx who proposed the generalization to nearly
conservative digraphs.


\begin{thebibliography}{99}

\bibitem{AP}
{\scshape E.M.~Arkin, C.H.~Papadimitriou}
\newblock {On negative cycles in mixed graphs}, 
\newblock  {\itshape Operations Research Letters}
{\bf 4} (3) (1985), pp.~113--116.

\bibitem{BK}
{\scshape M.A.~Babenko, A.V.~Karzanov}
\newblock {Minimum mean cycle problem in bidirected and
skew-symmetric graphs}, 
\newblock  {\itshape Discrete Optimization}
{\bf 6} (2009), pp.~92--97.

\bibitem{CLRS}
{\scshape T.H.~Cormen, C.E.~Leiserson, R.L.~Rivest, C.~Stein}
\newblock {Introduction to Algorithms}, 
\newblock  {\itshape MIT Press, Cambridge}
third edition, (2009)

\bibitem{GK}
{\scshape A.V.~Goldberg, A.V.~Karzanov}
\newblock {Path problems in skew-symmetric graphs}, 
\newblock  {\itshape Combinatorica}
{\bf 16} (3) (1996), pp.~353--382.

\bibitem{K}
{\scshape R.M.~Karp}
\newblock {A characterization of the minimum cycle mean in a digraph}, 
\newblock  {\itshape Discrete Mathematics}
{\bf 23} (1978), pp.~309--311.

\bibitem{egres}
{\scshape Z.~Kir\'aly}
\newblock {Shortest paths in mixed graphs}, 
\newblock {\itshape Egres Technical Report TR-2012-20},
\href{http://www.cs.elte.hu/egres/www/tr-12-20.html}
{www.cs.elte.hu/egres/}

\bibitem{M}
{\scshape J.~Mor\'avek}
\newblock {A note upon minimal path problem}, 
\newblock  {\itshape Journal of Mathematical
Analysis and Applications}
{\bf 30} (1970), pp.~702--717.


\end{thebibliography}
\end{document}